\newtheorem{prop}{Proposition}
\newtheorem{coro}{Corollary}
\theoremstyle{remark}
\newtheorem{exm}{Example}
\theoremstyle{definition}
\newtheorem{defin}{Definition}
\def\states{\mathfrak{S}}
\def\Pe{\mathcal{P}}
\def\Ha{\mathcal{H}}
\def\Ce{\mathcal{C}}
\def\conv{\mathit{conv}}
\def\aff{\mathit{aff}}
\def\degcom{\mathit{DegCom}}
\def\<{\langle}
\def\>{\rangle}
\DeclareMathOperator{\Tr}{Tr}
\def\I{\mathds{1}}
\begin{document}

\title{Conditions on the existence of maximally incompatible two-outcome measurements in General Probabilistic Theory}

\author{Anna Jen\v{c}ov\'{a}}
\affiliation{Mathematical Institute, Slovak Academy of Sciences, \v Stef\' anikova 49, Bratislava, Slovakia}
\author{Martin Pl\'{a}vala}
\affiliation{Mathematical Institute, Slovak Academy of Sciences, \v Stef\' anikova 49, Bratislava, Slovakia}

\begin{abstract}
We formulate the necessary and sufficient conditions for the existence of a pair of maximally incompatible two-outcome measurements in a finite dimensional General Probabilistic Theory. The conditions are on the geometry of the state space; they require existence of two pairs of parallel exposed faces with additional condition on their intersections. We introduce the notion of discrimination measurement and show that the conditions for a pair of two-outcome measurements to be maximally incompatible are equivalent to requiring that a (potential, yet non-existing) joint measurement of the maximally incompatible measurements would have to discriminate affinely dependent points. We present several examples to demonstrate our results.
\end{abstract}

\maketitle

\section{Introduction}
General Probabilistic Theories have recently gained a lot of attention. It was identified that several non-classical effects that we know from Quantum Mechanics, such as steering and Bell nonlocality \cite{WisemanJonesDoherty-nonlocal}, can be found in most General Probabilistic Theories. Moreover, it was shown that  one can violate even the bounds we know from Quantum Mechanics. In finite dimensional Quantum Mechanics the minimal degree of compatibility of measurements is bounded below by a dimension-dependent constant \cite{HeinosaariSchultzToigoZiman-maxInc}, while a General Probabilistic Theory may admit pairs of maximally incompatible two-outcome measurements \cite{BuschHeinosaariSchultzStevens-compatibility}, i.e.  two-outcome measurements such that their degree of compatibility attains the minimal value $\frac{1}{2}$.

In the present article we show necessary and sufficient conditions for a pair of maximally incompatible measurements to exist in a given General Probabilistic Theory. The conditions restrain the possible geometry of the state space. We also introduce the notion of discrimination two-outcome measurement and show how the concept of discrimination measurements is connected to maximally incompatible measurements. Our results are demonstrated on some examples. In particular, it is shown that maximally incompatible measurements exist for quantum channels. A somewhat different notion of compatibility of measurements on quantum channels and combs has been recently researched in \cite{SedlakReitznerChiribellaZiman-compatibility}, where similar results were found.

The article is organized as follows: in Sec. \ref{sec:preliminary} we provide a quick review of General Probabilistic Theory and of the notation we will use, in Sec. \ref{sec:meas} we introduce the two-outcome measurements, in Sec. \ref{sec:degcom} we introduce the degree of compatibility and the linear program for compatibility of two-outcome measurements. In Sec. \ref{sec:maxInc} we formulate and prove the necessary and sufficient conditions for maximally incompatible two-outcome measurements to exist. In Sec. \ref{sec:disc} we introduce the concept of discrimination measurement and we show that two-outcome measurements are maximally incompatible if and only if their joint measurement would have to discriminate affinely dependent points, which is impossible.

\section{Structure of General Probabilistic Theory} \label{sec:preliminary}
General Probabilistic Theories form a general framework that provides a unified description of all physical systems known today. We will present the standard definition of a finite dimensional General Probabilistic Theory in a quick review just to settle the notation.

The central notion is that of a state space, that is a compact convex  subset $K\subset \mathbb{R}^n$, representing the set of states of some system. The convex combinations are  interpreted operationally, see e.g. \cite[Part 2]{HeinosaariZiman-MLQT}.

Let $A(K)$ denote the ordered linear space of affine functions $f:K \to \mathbb{R}$. The order on $A(K)$ is introduced in a natural way; let $f, g \in A(K)$ then $f \geq g$ if and only if $f(x) \geq g(x)$ for all $x \in K$. 
 Let $A(K)^+$ be the positive cone, that is the generating, pointed and convex cone of positive affine functions on $K$. We denote the constant functions by the value they attain, i.e. $1(x)=1$ for all $x \in K$. Let $E(K) = \{ f \in A(K): 1 \geq f \geq 0 \}$ denote the set of effects on $K$.

Let $A(K)^*$ be the dual to $A(K)$ and let $\< \psi, f \>$ denote the value of the functional $\psi\in A(K)^*$ on $f \in A(K)$. Using the cone $A(K)^+$ we define the dual order on $A(K)^*$ as follows: let $\psi_1, \psi_2 \in A(K)^*$, then $\psi_1 \geq \psi_2$ if and only if $\< \psi_1, f \> \geq \< \psi_2, f \>$ for every $f \in A(K)^+$. The dual positive cone is  $A(K)^{*+} = \{ \psi \in A(K)^*: \psi \geq 0 \}$ where $0$ denotes the zero functional, $\<0, f \> = 0$ for all $f \in A(K)$.

Let $x \in K$, then $\phi_x$ will denote the positive and normed functional such that $\< \phi_{x}, f \> = f(x)$. It can be seen that for every functional $\psi \in A(K)^{*+}$ such that $\< \psi, 1 \> = 1$ there is some $x \in K$ such that $\psi = \phi_x$, see \cite[Theorem 4.3]{AsimowEllis}. This implies that the set $\states_K = \{ \phi_x: x \in K \}$ is a base of the cone $A(K)^{*+}$, i.e. for every $\psi \in A(K)^{*+}$, $\psi \neq 0$ there is a unique $x \in K$ and unique $\alpha \in \mathbb{R}$, $\alpha > 0$ such that $\psi = \alpha \phi_x$.

For any $X \subset \mathbb{R}^n$, $\conv(X)$  will denote the convex hull of $X$ and  $\aff(X)$ the affine hull of $X$.

\section{Measurements in General Probabilistic Theory} \label{sec:meas}
Let $K\subset \mathbb{R}^n$ be a state space. A measurement on $K$ is an affine map $m: K \to \Pe(\Omega)$, where $\Omega$ is the sample space, that is a measurable space representing the set of all possible measurement outcomes, and $\Pe(\Omega)$ is the set of all probability measures on $\Omega$.

We will be mostly interested in two-outcome measurements, i.e. measurements with the sample space $\Omega = \{ \omega_1, \omega_2 \}$. Let $\mu \in \Pe ( \Omega )$, then $\mu = \lambda \delta_1 + (1-\lambda) \delta_2$ for some $\lambda \in [0, 1] \subset \mathbb{R}$, where $\delta_1=\delta_{\omega_1}$, $\delta_2=\delta_{\omega_2} $ are the Dirac measures. This shows that the general form of two-outcome measurement $m_f$ is
\begin{equation*}
m_f = f \delta_1 + (1-f) \delta_2
\end{equation*}
for some $f \in E(K)$. Strictly speaking, this should be written as $m_f = f \otimes \delta_1 + (1-f) \otimes \delta_2$, since any map $m_f: K \to \Pe(\Omega)$ can be identified with a point of $A(K)^+ \otimes \Pe(\Omega)$, see e.g. \cite{Ryan-tensProd}. The interpretation is that a point $x \in K$ is mapped to the probability measure $m_f(x) = f(x) \delta_1 + (1-f(x)) \delta_2$, i.e. $f(x)$ corresponds to the probability of measuring the outcome $\omega_1$.

Let $f, g \in E(K)$ and let $m_f$, $m_g$ be the corresponding two-outcome measurements. We will keep this notation throughout the paper. The two-outcome measurements $m_f$, $m_g$ are compatible if and only if there exists a function $p \in E(K)^+$ such that
\begin{align}
f &\geq p, \label{eq:meas-cond-1} \\
g &\geq p, \label{eq:meas-cond-2} \\
1 + p &\geq f + g, \label{eq:meas-cond-3}
\end{align}
see \cite{Plavala-simplex} for a derivation of these conditions from the standard conditions that can be found e.g. in \cite[Chapter 2]{Holevo-QT}.

\begin{prop} \label{prop:meas-postProc}
$m_f$, $m_g$ are compatible if and only if $m_{(1-f)}$, $m_g$ are compatible.
\end{prop}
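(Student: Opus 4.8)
The plan is to exploit the operational meaning of $m_{1-f}$: it is simply $m_f$ with its two outcomes relabelled, namely $m_{1-f} = (1-f)\delta_1 + f\delta_2$. Since compatibility ought not to depend on how we name the outcomes of a measurement, the equivalence should hold, and because applying the map $f \mapsto 1-f$ twice returns $f$, it suffices to prove only the forward implication; the converse then follows by applying that implication with $1-f$ in place of $f$.

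Concretely, assume $m_f, m_g$ are compatible, witnessed by some $p \in E(K)^+$ satisfying \eqref{eq:meas-cond-1}--\eqref{eq:meas-cond-3}. Compatibility of $m_{1-f}, m_g$ requires a function $p' \in E(K)^+$ with $1-f \geq p'$, $g \geq p'$ and $1+p' \geq (1-f)+g$. I would take $p' = g - p$ and verify these directly: the positivity $p' \geq 0$ is precisely \eqref{eq:meas-cond-2}; the inequality $g \geq p'$ reduces to $p \geq 0$; the inequality $1-f \geq p'$ rearranges to \eqref{eq:meas-cond-3}; and $1 + p' \geq (1-f)+g$ rearranges to \eqref{eq:meas-cond-1}. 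The bound $p' \leq 1-f \leq 1$ then shows $p' \in E(K)$, so $p'$ is a legitimate witness. Thus each of the four requirements for the pair $(1-f, g)$ matches one of the conditions for $(f,g)$.

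The only real content is guessing the substitution $p' = g - p$, which is transparent from the joint-measurement picture underlying \eqref{eq:meas-cond-1}--\eqref{eq:meas-cond-3}: a joint measurement assigns the effects $p$, $f-p$, $g-p$ and $1-f-g+p$ to the four outcome pairs, and relabelling the outcomes of $m_f$ merely permutes these effects, carrying the ``first--first'' effect $p$ to the ``first--first'' effect $g-p$ of the relabelled pair. I do not expect a genuine obstacle here: once the substitution is written down, the proof is a line-by-line matching of inequalities, and the map $p \mapsto g - p$ is an involution, which makes the claimed equivalence manifestly symmetric.
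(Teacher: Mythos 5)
Your proposal is correct and follows essentially the same route as the paper: the same substitution $p'=g-p$, the same line-by-line matching of the inequalities \eqref{eq:meas-cond-1}--\eqref{eq:meas-cond-3}, and the same observation that the converse follows because $1-(1-f)=f$.
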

\begin{proof} Assume that $m_f$, $m_g$ are compatible and let $p\in E(K)$ satisfy \eqref{eq:meas-cond-1} - \eqref{eq:meas-cond-3}.
Let $p' = g - p$, then Eq. \eqref{eq:meas-cond-2} implies $p' \geq 0$, Eq. \eqref{eq:meas-cond-3} implies $1-f \geq p'$,  $p \geq 0$ implies $g \geq p'$ and \eqref{eq:meas-cond-1} implies $1 + p' \geq (1-f) + g$.

Since $1 - (1-f) = f$ it is clear that the compatibility of $m_{(1-f)}$, $m_g$ implies compatibility of $m_f$, $m_g$ in the same manner.
\end{proof}

\section{Degree of compatibility} \label{sec:degcom}
\begin{defin}
A coin-toss measurement on $K$ is a constant map 
\begin{equation*}
\tau (x) = \mu \in \Pe(\Omega),\qquad x\in K.
\end{equation*}
\end{defin}

It is straightforward that a coin-toss measurement is compatible with any other measurement.
In the following we state the usual definition of the degree of compatibility.
\begin{defin}
Let $m_f, m_g$ be two-outcome measurement on $K$ with sample space $\Omega$. The  degree of compatibility of  $m_f, m_g$ is defined as
\begin{align*}
\degcom (m_f, m_g) =& \sup_{\substack{0 \leq \lambda \leq 1 \\ \tau_1, \tau_2}} \{ \lambda : \lambda m_f + (1-\lambda) \tau_1, \\ & \lambda m_g + (1-\lambda) \tau_2 \; \text{are compatible} \}
\end{align*}
where the supremum is taken over all coin-toss measurements $\tau_1, \tau_2$.
\end{defin}
It is easy to see that for every two measurements $m_f$, $m_g$ we always have $\degcom(m_f, m_g) \geq \frac{1}{2}$, see e.g. \cite{HeinosaariMiyaderaZiman-compatibility}. The following immediately follows from Prop. \ref{prop:meas-postProc}.

\begin{coro} \label{coro:degcom-postProc}
Let $m_f$, $m_g$ be two-outcome measurements, then
\begin{align*}
\degcom(m_f, m_g) &= \degcom(m_{(1-f)}, m_g) \\
&= \degcom(m_f, m_{(1-g)}) \\
&= \degcom(m_{(1-f)}, m_{(1-g)}).
\end{align*}
\end{coro}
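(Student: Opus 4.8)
The plan is to derive all three equalities directly from Proposition~\ref{prop:meas-postProc} by showing that the complementation $f \mapsto 1-f$ interacts well with the structure of the degree of compatibility. The key observation is that the supremum defining $\degcom$ is taken over pairs obtained by mixing $m_f$, $m_g$ with coin-toss measurements, and that the mixing operation commutes with complementation. So first I would note that for any $0 \le \lambda \le 1$ and any coin-toss $\tau_1$, the mixed measurement $\lambda m_f + (1-\lambda)\tau_1$ is itself a two-outcome measurement $m_{f'}$ for a suitable effect $f'$, and that replacing $f$ by $1-f$ throughout simply replaces $f'$ by $1-f'$ (again adjusting the coin-toss). This lets me lift the statement of Proposition~\ref{prop:meas-postProc} from the raw measurements to the mixed measurements appearing inside the supremum.

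Next I would establish the first equality, $\degcom(m_f,m_g) = \degcom(m_{(1-f)},m_g)$, as the basic case. I would fix an admissible $\lambda$ together with witnessing coin-tosses $\tau_1,\tau_2$ for the pair $(m_f,m_g)$, write the first mixed measurement as $m_{f'}$ with $f' = \lambda f + (1-\lambda)c$ for the constant $c$ determined by $\tau_1$, and observe that $1-f' = \lambda(1-f) + (1-\lambda)(1-c)$, which is exactly the mixture of $m_{(1-f)}$ with a coin-toss at value $1-c$. Applying Proposition~\ref{prop:meas-postProc} to the compatible pair $(m_{f'}, \lambda m_g + (1-\lambda)\tau_2)$ yields compatibility of $(m_{(1-f')}, \lambda m_g + (1-\lambda)\tau_2)$, i.e. the same $\lambda$ is admissible for $(m_{(1-f)}, m_g)$. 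Hence the two suprema range over the same set of admissible $\lambda$, and by symmetry of the argument they are equal.

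Finally, the remaining two equalities follow by iterating the first. The identity $\degcom(m_f,m_g) = \degcom(m_f,m_{(1-g)})$ is obtained by the same reasoning applied to the second argument (or, if one prefers, by invoking the evident symmetry $\degcom(m_f,m_g)=\degcom(m_g,m_f)$ together with the first equality), and then $\degcom(m_f,m_{(1-g)}) = \degcom(m_{(1-f)},m_{(1-g)})$ is just the first equality applied with $g$ replaced by $1-g$. Chaining these gives the full statement.

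I do not expect a serious obstacle here, since the corollary really is an immediate consequence of the proposition; the text itself signals this by saying ``the following immediately follows.'' The only point requiring a little care is the bookkeeping showing that mixing with a coin-toss and complementation genuinely commute, so that an admissible $\lambda$ for one pair transfers to an admissible $\lambda$ for the complemented pair \emph{with the same value of $\lambda$} rather than some rescaled value; getting this correspondence exactly right is what makes the suprema coincide.
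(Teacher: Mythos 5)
Your proposal is correct and matches the paper's intent exactly: the paper gives no explicit proof, stating only that the corollary ``immediately follows'' from Proposition~\ref{prop:meas-postProc}, and your bookkeeping — that $\lambda m_f + (1-\lambda)\tau_1 = m_{f'}$ with $f' = \lambda f + (1-\lambda)\mu_1$ and that $1-f'$ is the corresponding mixture of $1-f$ with the complemented coin-toss — is precisely the omitted verification. The appeal to symmetry of the compatibility conditions \eqref{eq:meas-cond-1}--\eqref{eq:meas-cond-3} in $f$ and $g$ for the second equality, and iteration for the third, completes the argument as intended.
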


\begin{defin}
We will say that two measurements are maximally incompatible if $\degcom(m_f, m_g) = \frac{1}{2}$.
\end{defin}
It is known that such measurements exist for some state spaces \cite{BuschHeinosaariSchultzStevens-compatibility}, but they do not exist in finite dimensional Quantum Mechanics \cite{HeinosaariSchultzToigoZiman-maxInc}.

Let $\tau = \frac{1}{2}(\delta_1 + \delta_2)$, then we define
\begin{align*}
\degcom_{\frac{1}{2}} (m_f, m_g) =& \sup_{0 \leq \lambda \leq 1} \{ \lambda : \lambda m_f + (1-\lambda) \tau, \\ & \lambda m_g + (1-\lambda) \tau \; \text{are compatible} \}
\end{align*}
as the degree of compatibility provided only by mixing the measurements $m_f$, $m_g$ with the coin-toss measurement $\tau$. Clearly we have
\begin{equation*}
\degcom_{\frac{1}{2}} (m_1, m_2) \leq \degcom (m_f, m_g)
\end{equation*}
so $\degcom_{\frac{1}{2}} (m_f, m_g) = 1$ implies $\degcom (m_f, m_g) = 1$ and $\degcom (m_f, m_g) = \frac{1}{2}$ implies $\degcom_{\frac{1}{2}} (m_f, m_g) = \frac{1}{2}$.

In \cite{Plavala-simplex} it was shown that the dual linear program for the compatibility of the measurements $m_f$, $m_g$ is of the form
\begin{align*}
&\sup \big( a_3 (f(z_3) + g(z_3) - 1) - a_1 f(z_1) - a_2 g(z_2) \big) \\
&a_1 + a_2 \leq 2 \\
&a_3 \phi_{z_3} \leq a_1 \phi_{z_1} + a_2 \phi_{z_2}
\end{align*}
where $z_1, z_2, z_3 \in K$ and $a_1, a_2, a_3$ are non-negative numbers. Let $\beta$ denote the supremum, then we have
\begin{equation*}
\beta = \dfrac{1 - \degcom_{\frac{1}{2}} (m_f, m_g)}{\degcom_{\frac{1}{2}} (m_f, m_g)},
\end{equation*}
i.e. $\beta = 0$ if the measurements are compatible and $\beta > 0$ implies that the measurements are incompatible.

Assume that the measurements $m_f$, $m_g$ are incompatible. Then we have $\beta > 0$, which implies $a_1 + a_2 > 0$. We will show  that if the supremum is reached, we must have $a_1 + a_2 = 2$. Assume that the supremum is reached for some $a_1, a_2, a_3$ and $z_1, z_2, z_3$ such that $a_1 + a_2 < 2$ and define
\begin{align*}
a_1' &= \dfrac{2}{a_1 + a_2} a_1, \\
a_2' &= \dfrac{2}{a_1 + a_2} a_2, \\
a_3' &= \dfrac{2}{a_1 + a_2} a_3. \\
\end{align*}
It is straightforward to see that $a_3' \phi_{z_3} \leq a_1' \phi_{z_1} + a_2' \phi_{z_2}$, moreover
\begin{align*}
\beta < a_3' (f(z_3) + g(z_3) - 1) - a_1' f(z_1) - a_2' g(z_2)
\end{align*}
which is a contradiction.

It follows that in the case when the measurements $m_f$, $m_g$ are incompatible we can write the linear program as
\begin{align}
&\sup 2 \big( \eta (f(z_3) + g(z_3) - 1) - \nu f(z_1) - (1-\nu) g(z_2) \big) \nonumber \\
&\eta \phi_{z_3} \leq \nu \phi_{z_1} + (1-\nu) \phi_{z_2} \label{eq:degcom-linProg-final}
\end{align}
where we have set $a_1 + a_2 = 2$ and used a substitution $2 \nu = a_1$, $2 \eta = a_3$. Also note that $\eta \phi_{z_3} \leq \nu \phi_{z_1} + (1-\nu) \phi_{z_2}$ implies that there exists $z_4 \in K$ such that
\begin{equation*}
\nu z_1 + (1-\nu) z_2 = \eta z_3 + (1-\eta) z_4.
\end{equation*}

\section{Maximally incompatible two-outcome measurements} \label{sec:maxInc}
In this section, we wish to find conditions on the state space, under which $\degcom(m_f, m_g) = \frac{1}{2}$  for a pair of two-outcome measurements $m_f, m_g$. A sufficient condition was proved in \cite{BuschHeinosaariSchultzStevens-compatibility}: a pair of maximally incompatible two-outcome measurements exists if the state space $K$ is a square, or more generally, there are two pairs of parallel hyperplanes tangent to $K$, such that the corresponding exposed faces contain the edges of a square. Here a square is defined as the convex hull of four points $x_1$, $x_2$, $x_3$, $x_4$ satisfying $x_1+x_2=x_3+x_4$. Besides the square, such state spaces include the cube, pyramid, double pyramid, cylinder etc. We will show that this condition is also necessary, so that it characterizes state spaces admitting a pair of maximally incompatible two-outcome measurements.

Let us fix a pair of effects $f,g\in E(K)$. The following notation  will be used throughout. 

\begin{align*}
F_0 &= \{ z \in K : f(z) = 0 \}, \\
F_1 &= \{ z \in K : f(z) = 1 \}, \\
G_0 &= \{ z \in K : g(z) = 0 \}, \\
G_1 &= \{ z \in K : g(z) = 1 \}.
\end{align*}

We begin by rephrasing the above sufficient condition. For completeness, we add a proof along the lines of \cite{BuschHeinosaariSchultzStevens-compatibility}.

\begin{prop} \label{prop:maxInc-necessary}
Assume there are some points  $x_{00} \in F_0 \cap G_0$, $x_{10} \in F_1 \cap G_0$, $x_{01} \in F_0 \cap G_1$, $x_{11} \in F_1 \cap G_1$ such that
\begin{equation*}
\dfrac{1}{2} ( x_{00} + x_{11} ) = \dfrac{1}{2} ( x_{10} + x_{01} ).
\end{equation*}
Then $\degcom(m_f, m_g) = \frac{1}{2}$.
\end{prop}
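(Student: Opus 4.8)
The plan is to use the easy lower bound together with a direct verification of the matching upper bound. Since $\degcom(m_f,m_g)\ge\frac{1}{2}$ holds for any pair of measurements, it suffices to show $\degcom(m_f,m_g)\le\frac{1}{2}$, i.e. that for every $\lambda>\frac{1}{2}$ and every choice of coin-toss measurements $\tau_1,\tau_2$ the mixtures $\lambda m_f+(1-\lambda)\tau_1$ and $\lambda m_g+(1-\lambda)\tau_2$ are incompatible. A coin toss on $\Omega=\{\omega_1,\omega_2\}$ is fixed by a single number, so I would write $\tau_1=s\delta_1+(1-s)\delta_2$ and $\tau_2=t\delta_1+(1-t)\delta_2$ with $s,t\in[0,1]$; then the two mixtures are exactly the two-outcome measurements $m_{f'}$, $m_{g'}$ belonging to the effects $f'=\lambda f+(1-\lambda)s$ and $g'=\lambda g+(1-\lambda)t$.

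First I would argue by contradiction: suppose $m_{f'}$, $m_{g'}$ are compatible and let $p$ satisfy \eqref{eq:meas-cond-1}--\eqref{eq:meas-cond-3} with $f',g'$ in place of $f,g$, together with $p\ge 0$. The entire argument lives at the four given points. Using only $f(x_{ij}),g(x_{ij})\in\{0,1\}$ built into the definitions of $F_0,F_1,G_0,G_1$, I would compute the values $f'(x_{ij}),g'(x_{ij})$ and read off the relevant bounds on the numbers $p(x_{ij})$.

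Second comes the heart of the proof. The hypothesis $x_{00}+x_{11}=x_{10}+x_{01}$ and affinity of $p$ give the single linear relation $p(x_{00})+p(x_{11})=p(x_{10})+p(x_{01})$. Into it I would feed the lower bound on $p(x_{11})$ from \eqref{eq:meas-cond-3}, namely $p(x_{11})\ge 2\lambda+(1-\lambda)(s+t)-1$; the upper bounds $p(x_{10})\le(1-\lambda)t$ and $p(x_{01})\le(1-\lambda)s$ from \eqref{eq:meas-cond-2} and \eqref{eq:meas-cond-1}, valid because $g$ vanishes at $x_{10}$ and $f$ vanishes at $x_{01}$; and the non-negativity $p(x_{00})\ge 0$. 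Writing $p(x_{11})=p(x_{10})+p(x_{01})-p(x_{00})$ and substituting, the coin-toss parameters $s,t$ cancel and one is left with $2\lambda-1\le 0$, contradicting $\lambda>\frac{1}{2}$.

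I expect the only delicate point to be the bookkeeping in the second step: among the several bounds available at each vertex one must select precisely the combination whose dependence on $s$ and $t$ cancels in the affine relation, and it is exactly the square-like relation $x_{00}+x_{11}=x_{10}+x_{01}$ that makes this cancellation possible. Once the contradiction $2\lambda\le 1$ is obtained, incompatibility for all $\lambda>\frac{1}{2}$ yields $\degcom(m_f,m_g)\le\frac{1}{2}$, and together with the general lower bound this gives $\degcom(m_f,m_g)=\frac{1}{2}$.
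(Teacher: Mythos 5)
Your proposal is correct and follows essentially the same route as the paper: evaluate the compatibility conditions for the mixed measurements at the three points $x_{11}$, $x_{10}$, $x_{01}$, and combine the resulting bounds with the affine relation $p(x_{00})+p(x_{11})=p(x_{10})+p(x_{01})$ and $p(x_{00})\ge 0$ to force $2\lambda\le 1$. The only cosmetic differences are your naming of the coin-toss parameters and your framing as a contradiction against $\lambda>\frac{1}{2}$ rather than a direct derivation of $\lambda\le\frac{1}{2}$.
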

\begin{proof}
 Let $p$ be any positive affine function on $K$, then we have
\begin{equation*}
p(x_{11}) + p(x_{00}) = p(x_{10}) + p(x_{01})
\end{equation*}
and
\begin{equation*}
p(x_{11}) \leq p(x_{10}) + p(x_{01})
\end{equation*}
follows. Let $\tau_1 = \mu_1 \delta_{\omega_1} + (1-\mu_1) \delta_{\omega_2}$ and $\tau_2 = \mu_2 \delta_{\omega_1} + (1-\mu_2) \delta_{\omega_2}$ be coin-toss measurements, then the conditions \eqref{eq:meas-cond-1} - \eqref{eq:meas-cond-3} for  $\lambda m_f + (1-\lambda) \tau_1$ and $\lambda m_g + (1-\lambda) \tau_2$ take the form
\begin{align*}
\lambda f + (1-\lambda) \mu_1 &\geq p, \\
\lambda g + (1-\lambda) \mu_2 &\geq p, \\
1 + p &\geq \lambda (f + g) + (1-\lambda)(\mu_1 + \mu_2).
\end{align*}
Expressing some of these conditions at the points $x_{10}, x_{01}, x_{11}$ we get
\begin{align}
1 + p(x_{11}) &\geq 2 \lambda + (1-\lambda)(\mu_1 + \mu_2), \label{eq:max-Inc-necessary-x11} \\
(1-\lambda) \mu_1 &\geq p(x_{01}), \label{eq:max-Inc-necessary-x01} \\
(1-\lambda) \mu_2 &\geq p(x_{10}). \label{eq:max-Inc-necessary-x10}
\end{align}
From \eqref{eq:max-Inc-necessary-x11} we obtain
\begin{equation*}
2 \lambda \leq 1 + p(x_{11}) - (1-\lambda)(\mu_1 + \mu_2)
\end{equation*}
and since from \eqref{eq:max-Inc-necessary-x01} and \eqref{eq:max-Inc-necessary-x10} we have
\begin{equation*}
p(x_{11}) \leq p(x_{10}) + p(x_{01}) \leq (1-\lambda)(\mu_1 + \mu_2)
\end{equation*}
it follows that $\lambda \leq \frac{1}{2}$.
\end{proof}

At this point we can demonstrate that  maximally incompatible two-outcome 
measurements exist for the set of quantum channels. We will use the standard and well-know definitions that may be found in \cite{HeinosaariZiman-MLQT}.
\begin{exm} \label{ex:maxInc-channels}
Let $\Ha$ denote a finite dimensional complex Hilbert space, let $B_h(\Ha)$ denote the set of self-adjoint operators on $\Ha$ and let $\I$ denote the identity operator. Let $A \in B_h(\Ha)$, then we write $A \geq 0$ if $A$ is positive semi-definite. Let $\Ha \otimes \Ha$ denote the tensor product of $\Ha$ with itself and let $\Tr_1$ denote the partial trace. Let
\begin{equation*}
\Ce_\Ha = \{ C \in B_h(\Ha \otimes \Ha): \Tr_1 (C) = \I, C \geq 0 \}
\end{equation*}
be the set of Choi matrices of all channels $B_h(\Ha) \to B_h(\Ha)$. This is clearly a finite dimensional state space.
The effects  $f\in E(\Ce_\Ha)$ have the form $f(C)=\Tr CM$, $C\in \Ce_\Ha$, where  $M\in B_h(\Ha\otimes\Ha)$ is 
such that 
\[
0\le M\le \I\otimes \sigma,
\]
for some density operator $\sigma$  on $\Ha$, \cite{Jencova-genChannels,HeinosaariMiyaderaZiman-compatibility}, so that effects are given by two-outcome PPOVMs defined in \cite{Ziman-ppovm}. 

Let $\dim(\Ha) = 2$, let $|0\>, |1\>$ be an orthonormal basis of $\Ha$ and let 
$M,N \in B_h(\Ha \otimes \Ha)$ be given as
\begin{align*}
M&= |0\>\<0| \otimes |0\>\<0|, \\
N &= |0\>\<0| \otimes |1\>\<1|. \\
\end{align*}
Then  $0\le M\le  \I \otimes |0\>\<0|$ and $0\le N\le  \I \otimes |1\>\<1|$, so that $f(C)=\Tr CM$ and $g(C)=\Tr CN$ define effects on $\Ce_\Ha$. Let
\begin{align*}
C_{00} &= |1\>\<1| \otimes \I\\
C_{10} &= |0\>\<0| \otimes |0\>\<0| + |1\>\<1| \otimes |1\>\<1|, \\
C_{01} &= |0\>\<0| \otimes |1\>\<1| + |1\>\<1| \otimes |0\>\<0|, \\
C_{11} &= |0\>\<0| \otimes \I.
\end{align*}
It is easy to check that $C_{00}, C_{10}, C_{01}, C_{11} \in \Ce_\Ha$. Moreover
\begin{equation*}
C_{00} + C_{11} = \I \otimes \I = C_{10} + C_{01}
\end{equation*}
and 
\begin{align*}
\Tr(C_{00} M) = \Tr(C_{00} N) &= 0, \\
\Tr(C_{10} M) =1,\  \Tr(C_{10} N) &= 0, \\
\Tr(C_{01} M) =0,\  \Tr(C_{01} N) &= 1, \\
\Tr(C_{11}M) = \Tr(C_{11} N) &= 1.
\end{align*}
In conclusion,  $C_{00}$, $C_{10}$, $C_{01}$, $C_{11}$ satisfies the properties in Prop. \ref{prop:maxInc-necessary}, so that 
the two-outcome measurements $m_f$ and $m_g$ are maximally incompatible.

Analogical fact was also observed in \cite{SedlakReitznerChiribellaZiman-compatibility, HeinosaariMiyaderaZiman-compatibility} in different circumstances.
\end{exm}

We proceed to prove some necessary conditions.
\begin{prop} \label{prop:maxInc-notEmpty}
$\degcom(m_f, m_g) = \frac{1}{2}$ only if
\begin{align*}
F_0 \cap G_0 &\neq \emptyset, \\
F_0 \cap G_1 &\neq \emptyset, \\
F_1 \cap G_0 &\neq \emptyset, \\
F_1 \cap G_1 &\neq \emptyset.
\end{align*}
\end{prop}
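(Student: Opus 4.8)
The plan is to work entirely with the dual linear program \eqref{eq:degcom-linProg-final} and to exploit the symmetry furnished by Corollary \ref{coro:degcom-postProc}. I would first observe that under the transformations $f\mapsto 1-f$ and $g\mapsto 1-g$ the four faces are interchanged as $F_0\leftrightarrow F_1$ and $G_0\leftrightarrow G_1$, while $\degcom$ is unchanged. Since the group generated by these two involutions acts transitively on the four intersections $F_i\cap G_j$, it suffices to prove that one of them, say $F_1\cap G_1$, is nonempty; the other three then follow by applying the same result to the pairs $(1-f,g)$, $(f,1-g)$ and $(1-f,1-g)$, each of which is again maximally incompatible by the corollary.

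So assume $\degcom(m_f,m_g)=\frac{1}{2}$. Then the measurements are incompatible and, as noted in the excerpt, $\degcom_{\frac{1}{2}}(m_f,m_g)=\frac{1}{2}$, so the program takes the form \eqref{eq:degcom-linProg-final} and its supremum $\beta$ equals $1$. The feasible region is compact: the points $z_1,z_2,z_3$ range over the compact set $K$, the constraint applied to the constant function $1$ gives $\eta\le1$, and the cone inequality \eqref{eq:degcom-linProg-final} is a closed condition. Hence the supremum is attained at some $z_1,z_2,z_3\in K$, $\nu\in[0,1]$ and $\eta\in[0,1]$.

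The core of the argument is then a pair of estimates for the objective $J=2\big(\eta(f(z_3)+g(z_3)-1)-\nu f(z_1)-(1-\nu)g(z_2)\big)$ at the optimizer. Dropping the two nonpositive terms $-2\nu f(z_1)$ and $-2(1-\nu)g(z_2)$ and using $f(z_3)+g(z_3)-1\le1$ gives $J\le2\eta$. On the other hand, applying the cone constraint \eqref{eq:degcom-linProg-final} to the positive affine functions $f$ and $g$ yields $\eta f(z_3)\le\nu f(z_1)+(1-\nu)f(z_2)$ and $\eta g(z_3)\le\nu g(z_1)+(1-\nu)g(z_2)$; substituting these into $J$ and using $f(z_2),g(z_1)\le1$ gives $J\le2-2\eta$. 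Summing the two estimates forces $J\le1$, so the equality $J=\beta=1$ makes both tight. This yields $\eta=\frac{1}{2}$, and since $\eta\neq0$ the tightness of the first estimate forces $f(z_3)=g(z_3)=1$; hence $z_3\in F_1\cap G_1$, which is therefore nonempty.

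The computation is short once the two estimates are in place, so the points needing care are the bookkeeping in the second estimate and the clean extraction of the equality conditions; in particular $\eta\neq0$, guaranteed by $J=1>0$, is exactly what lets one conclude $f(z_3)=g(z_3)=1$ rather than merely $f(z_3)+g(z_3)$ close to $2$. I expect the only genuine obstacle to be the attainment of the supremum, i.e.\ the compactness argument recorded above; without it the equality conditions would have to be recovered from a maximizing sequence together with a compactness argument on $K$, but with it the remainder is routine.
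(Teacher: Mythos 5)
Your proof is correct, but it takes a genuinely different route from the paper's. The paper argues by contraposition with an explicit construction: if, say, $F_1\cap G_1=\emptyset$, then $f+g<2$ on the compact set $K$, so one can pick $\lambda>\frac{1}{2}$ with $\lambda(f+g)\le 1$ and verify that $p=0$ satisfies \eqref{eq:meas-cond-1}--\eqref{eq:meas-cond-3} for $m_{\lambda f}$, $m_{\lambda g}$ (obtained by mixing with the coin toss $\tau=\delta_{\omega_2}$), whence $\degcom(m_f,m_g)>\frac{1}{2}$; the remaining three cases follow from Corollary \ref{coro:degcom-postProc} exactly as in your reduction. You instead run the dual linear program forward: assuming $\degcom=\frac{1}{2}$ you force the optimum of \eqref{eq:degcom-linProg-final} to equal $1$, establish attainment by compactness of $K^3\times[0,1]^2$ and closedness of the cone constraint, and extract $\eta=\frac12$ and $f(z_3)=g(z_3)=1$ from the two complementary bounds $J\le 2\eta$ and $J\le 2-2\eta$ (your second bound, obtained by testing the cone inequality against $f$ and $g$, is a repackaging of the estimates $\<\eta\phi_{z_3}-\nu\phi_{z_1},f\>\le 1-\nu$ etc.\ that the paper uses later). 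In effect you are proving a fragment of the stronger Proposition \ref{prop:maxInc-sufficient} and then invoking the symmetry group to cover all four faces; this is heavier machinery than the paper needs here, but it is sound, and it has the merit of making the attainment issue explicit, which the paper glosses over in its own proof of Proposition \ref{prop:maxInc-sufficient}. The paper's elementary argument buys a little more in the other direction: it shows that emptiness of any one intersection yields a concrete $\lambda>\frac12$, i.e.\ a quantitative strict improvement over $\frac12$.
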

\begin{proof}
Let $F_1 \cap G_1 = \emptyset$, then $f+g < 2$. Let $\tau = \delta_{\omega_2}$ and consider the measurements $\lambda m_f + (1-\lambda)\tau= m_{\lambda f}$ and $\lambda m_g + (1-\lambda) \tau= m_{\lambda g}$, $\lambda \in [0, 1]$.  Since $f+g < 2$, we can choose $\lambda > \frac{1}{2}$ such that $1 \geq \lambda(f+g)$, so that 
$p = 0$ satisfies  equations Eq. \eqref{eq:meas-cond-1} - \eqref{eq:meas-cond-3} for $m_{\lambda f}$, $m_{\lambda g}$.

The result for the other sets follows by using the Corollary \ref{coro:degcom-postProc}.
\end{proof}

The conditions given by the Prop. \ref{prop:maxInc-notEmpty} are not sufficient as we will demonstrate in the following example.
\begin{exm}
Let $K$ be a simplex with the vertices $x_1, x_2, x_3, x_4$ and let $b_1, b_2, b_3, b_4$ denote positive affine functions such that
\begin{equation*}
b_i (x_j) = \delta_{ij}.
\end{equation*}
Such functions exist because $K$ is a simplex. Let
\begin{align*}
f &= b_1 + b_2, \\
g &= b_1 + b_3,
\end{align*}
then we have
\begin{align*}
F_1 \cap G_1 &= \{ x_1 \}, \\
F_1 \cap G_0 &= \{ x_2 \}, \\
F_0 \cap G_1 &= \{ x_3 \}, \\
F_0 \cap G_0 &= \{ x_4 \},
\end{align*}
but clearly the measurements $m_f$ and $m_g$ must be compatible as $K$ is a simplex. Matter of fact, the Eq. \eqref{eq:meas-cond-1} - \eqref{eq:meas-cond-3} are satisfied with $p = b_1$.
\end{exm}

\begin{prop} \label{prop:maxInc-sufficient}
 $\degcom(m_f, m_g) = \frac{1}{2}$ if and only if  there exist points $x_{00}, x_{01}, x_{10}, x_{11}$ such that $x_{00} \in F_0 \cap G_0$, $x_{10} \in F_1 \cap G_0$, $x_{01} \in F_0 \cap G_1$, $x_{11} \in F_1 \cap G_1$ and
\begin{equation*}
\dfrac{1}{2} ( x_{00} + x_{11} ) = \dfrac{1}{2} ( x_{10} + x_{01} ).
\end{equation*}
\end{prop}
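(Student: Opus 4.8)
The statement is an "if and only if." The "if" direction (existence of such points implies maximal incompatibility) is precisely Proposition~\ref{prop:maxInc-necessary}, already proved. So the plan is to concentrate entirely on the "only if" direction: assuming $\degcom(m_f, m_g) = \frac{1}{2}$, I must produce the four points $x_{00}, x_{10}, x_{01}, x_{11}$ lying in the respective face intersections and satisfying the parallelogram midpoint condition. By Proposition~\ref{prop:maxInc-notEmpty} I already know all four intersections $F_i \cap G_j$ are nonempty, so the content is in producing points whose midpoints coincide.

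**Exploiting the dual linear program.**
The natural tool is the reformulated dual program \eqref{eq:degcom-linProg-final}. Since $\degcom(m_f, m_g) = \frac{1}{2}$ forces $\degcom_{\frac{1}{2}}(m_f, m_g) = \frac{1}{2}$ (noted in the excerpt), the supremum $\beta$ equals $\frac{1 - \frac12}{\frac12} = 1$. Hence there exist (or are approached by) parameters $\nu \in [0,1]$, $\eta \geq 0$ and points $z_1, z_2, z_3 \in K$ with $\eta \phi_{z_3} \leq \nu \phi_{z_1} + (1-\nu)\phi_{z_2}$ and
\begin{equation*}
2\big(\eta(f(z_3) + g(z_3) - 1) - \nu f(z_1) - (1-\nu) g(z_2)\big) = 1 .
\end{equation*}
First I would argue the supremum is actually attained, by compactness of $K$ and of the parameter ranges (here I should check $\eta$ stays bounded, which the cone inequality evaluated against $1$ gives since $\eta \leq \nu \langle\phi_{z_1},1\rangle + (1-\nu)\langle\phi_{z_2},1\rangle = 1$). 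The existence of $z_4 \in K$ with $\nu z_1 + (1-\nu) z_2 = \eta z_3 + (1-\eta) z_4$, already recorded in the excerpt, will be the source of the midpoint relation.

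**Forcing the optimizers onto the faces.**
The crux is to show that at the optimum the effect values are pinned to $0$ or $1$, i.e. $f(z_1) = 0$, $g(z_2) = 0$, $f(z_3) = g(z_3) = 1$, and that $\nu = \frac12$, $\eta = 1$. The value $\frac12$ of the objective is the extreme case; since $f, g \in E(K)$ give $f(z_3) + g(z_3) - 1 \leq 1$, $f(z_1) \geq 0$, $g(z_2) \geq 0$, the objective $2(\eta(f(z_3)+g(z_3)-1) - \nu f(z_1) - (1-\nu)g(z_2))$ is bounded above by $2\eta \leq 2$; attaining exactly $1$ should, after using $\eta \leq 1$ and the positivity constraints, force each slack to vanish. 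Concretely I expect to derive $\eta = 1$, $f(z_1) = g(z_2) = 0$, and $f(z_3) = g(z_3) = 1$; with $\eta = 1$ the auxiliary relation collapses to $\nu z_1 + (1-\nu) z_2 = z_3 + (1 - 1)z_4$, wait --- this is where care is needed, so I would instead keep $\eta$ and extract additional constraints by testing the cone inequality $\eta\phi_{z_3} \leq \nu\phi_{z_1} + (1-\nu)\phi_{z_2}$ against $f$ and against $g$ separately, which yields $\eta f(z_3) \leq \nu f(z_1) + (1-\nu)f(z_2)$ and similarly for $g$; combining these with the optimality equation should squeeze $f(z_2) = 1$ and $g(z_1) = 1$ as well, thereby identifying $z_1 \in F_0 \cap G_1$, $z_2 \in F_1 \cap G_0$, $z_3 \in F_1 \cap G_1$ and the affine witness $z_4 \in F_0 \cap G_0$. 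Setting $x_{01} = z_1$, $x_{10} = z_2$, $x_{11} = z_3$, $x_{00} = z_4$, the relation $\nu z_1 + (1-\nu)z_2 = \eta z_3 + (1-\eta)z_4$ together with the forced values $\nu = \eta = \frac12$ delivers $\frac12(x_{01} + x_{10}) = \frac12(x_{11} + x_{00})$.

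**Anticipated main obstacle.**
The delicate part is the extraction of all four face memberships and the exact parameter values $\nu = \eta = \frac12$ from a single scalar optimality equation; a naive bound only pins down some of the effect values, and I expect to need to play the two cone inequalities (tested against $f$ and $g$) against the objective simultaneously to close every gap. A subtler point is justifying that the supremum is attained rather than merely approached: if attainment fails I would pass to a convergent subsequence of near-optimal data using compactness of $K$ and boundedness of $(\nu, \eta)$, and argue the limiting configuration satisfies the required equalities by continuity of $f, g$ and closedness of the cone constraint. This compactness/attainment step, combined with the algebra forcing the extremal values, is where the real work lies.
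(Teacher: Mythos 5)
Your proposal is correct and follows essentially the same route as the paper: reduce to the dual linear program with optimal value $1$, test the cone constraint $\eta\phi_{z_3}\le\nu\phi_{z_1}+(1-\nu)\phi_{z_2}$ against $f$ and $g$ separately to force $\nu=\eta=\frac{1}{2}$ and the extremal effect values, and use the witness point $z_4$ to obtain the midpoint relation and the remaining face memberships. Your additional care about attainment of the supremum (via compactness of $K$ and boundedness of $\eta\le 1$) addresses a detail the paper passes over silently, and your momentary $\eta=1$ detour is correctly self-repaired.
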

\begin{proof} The 'if' part is proved in Prop. \ref{prop:maxInc-necessary}. Conversely,
if $\degcom(m_f, m_g) = \frac{1}{2}$ then according to the results in Section \ref{sec:degcom}, the supremum in  \eqref{eq:degcom-linProg-final}  must be equal to 1, so that we must have 
\begin{equation}
\eta( f(z_3) + g(z_3) - 1) - \nu f(z_1) - (1-\nu) g(z_2) = \dfrac{1}{2} \label{eq:maxInc-functions}
\end{equation}
for some $\eta, \nu \in [0, 1]$ and $z_1, z_2, z_3 \in K$, such that
\begin{equation*}
\nu \phi_{z_1} + (1-\nu) \phi_{z_2} \geq \eta \phi_{z_3}.
\end{equation*}
It follows that
\begin{align}
\nu \phi_{z_1} \geq \eta \phi_{z_3} - (1-\nu) \phi_{z_2}, \label{eq:maxInc-pointIneq-mod-1} \\
(1-\nu) \phi_{z_2} \geq \eta \phi_{z_3} - \nu \phi_{z_1}. \label{eq:maxInc-pointIneq-mod-2}
\end{align}
Rewriting  Eq. \eqref{eq:maxInc-functions} we get
\begin{equation*}
\<\eta \phi_{z_3} - \nu \phi_{z_1},f\> + \<\eta \phi_{z_3} - (1-\nu) \phi_{z_2},g\> - \eta = \dfrac{1}{2}.
\end{equation*}
We clearly have $\<\eta \phi_{z_3} - \nu \phi_{z_1},f\> \leq \eta$, but Eq. \eqref{eq:maxInc-pointIneq-mod-2} implies $\<\eta \phi_{z_3} - \nu \phi_{z_1},f\> \leq 1-\nu$ and thus we must have $\<\eta \phi_{z_3} - \nu \phi_{z_1},f\> \leq \min(\eta, 1-\nu)$. Similarly, 
we get $\<\eta \phi_{z_3} - (1-\nu) \phi_{z_2},g\> \leq \min(\eta, \nu)$ and
\begin{equation*}
\dfrac{1}{2} \leq \min(\eta, \nu) + \min(\eta, 1-\nu) - \eta = \min(\nu, 1-\nu, \eta, 1-\eta)
\end{equation*}
which implies $\nu = \eta = \frac{1}{2}$. Moreover, there must be some $z_4 \in K$ such that
\begin{equation}
\dfrac{1}{2}( z_1 + z_2 ) = \dfrac{1}{2} ( z_3 + z_4 ). \label{eq:maxInc-squarePoints}
\end{equation}
Eq. \eqref{eq:maxInc-functions} now becomes
\begin{equation*}
f(z_3) + g(z_3) - f(z_1) - g(z_2) = 2
\end{equation*}
which implies $f(z_3) = g(z_3) = 1$ and $f(z_1) = g(z_2) = 0$ as $f, g \in E(K)$. From Eq. \eqref{eq:maxInc-squarePoints} we get
\begin{equation*}
f(z_2) = 1 + f(z_4),
\end{equation*}
which implies $f(z_2) = 1$, $f(z_4) = 0$ and
\begin{equation*}
g(z_1) = 1 + g(z_4),
\end{equation*}
which implies $g(z_1) = 1$, $g(z_4) = 0$. Together we get
\begin{align*}
z_3 \in F_1 \cap G_1, \\
z_2 \in F_1 \cap G_0, \\
z_1 \in F_0 \cap G_1, \\
z_4 \in F_0 \cap G_0.
\end{align*}
\end{proof}

In the remainder of this section, we aim to give some geometric interpretation of the condition in Prop. \ref{prop:maxInc-sufficient}.

\begin{coro} \label{coro:maxInc-parallelogram}
Let $S \subset \mathbb{R}^2$ be a state space, then maximally incompatible two-outcome measurements exist on $S$ if and only if $S$ is a parallelogram.
\end{coro}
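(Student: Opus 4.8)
The plan is to translate the algebraic condition of Proposition \ref{prop:maxInc-sufficient} into planar geometry and then argue by a containment sandwich. By Proposition \ref{prop:maxInc-sufficient}, maximally incompatible two-outcome measurements exist on $S$ if and only if there are effects $f,g \in E(S)$ and points $x_{00}\in F_0\cap G_0$, $x_{10}\in F_1\cap G_0$, $x_{01}\in F_0\cap G_1$, $x_{11}\in F_1\cap G_1$ with $x_{00}+x_{11}=x_{10}+x_{01}$. In $\mathbb{R}^2$ I would first record that such an $f$ must be nonconstant: since Proposition \ref{prop:maxInc-notEmpty} forces all four faces to be nonempty, $f$ attains both values $0$ and $1$, so $\ell_f^0=\{f=0\}$ and $\ell_f^1=\{f=1\}$ are distinct parallel lines and $S$ lies in the closed slab between them; the same holds for $g$ with lines $\ell_g^0,\ell_g^1$. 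Thus all four lines are supporting lines of $S$.

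For the \emph{if} direction, suppose $S$ is a parallelogram with vertices $v_1,v_2,v_3,v_4$ in cyclic order, so that $v_1+v_3=v_2+v_4$. I would define $f$ to be the affine function equal to $0$ on the edge $\conv\{v_1,v_2\}$ and $1$ on the parallel edge $\conv\{v_3,v_4\}$; since $S$ lies between these two parallel lines, $0\le f\le 1$, so $f\in E(S)$. Defining $g$ analogously as $0$ on $\conv\{v_2,v_3\}$ and $1$ on $\conv\{v_4,v_1\}$, each vertex lies on exactly one $f$-edge and one $g$-edge, which identifies $v_2\in F_0\cap G_0$, $v_3\in F_1\cap G_0$, $v_1\in F_0\cap G_1$, $v_4\in F_1\cap G_1$. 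Setting $x_{00}=v_2$, $x_{10}=v_3$, $x_{01}=v_1$, $x_{11}=v_4$, the parallelogram identity $v_2+v_4=v_1+v_3$ is exactly $x_{00}+x_{11}=x_{10}+x_{01}$, so Proposition \ref{prop:maxInc-sufficient} applies.

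For the \emph{only if} direction, I would use the four supporting lines to trap $S$. Since $S\subseteq\{0\le f\le1\}\cap\{0\le g\le1\}$, and the intersection of these two slabs is a region $Q$ bounded by $\ell_f^0,\ell_f^1,\ell_g^0,\ell_g^1$. The key step is to show $Q$ is a genuine bounded parallelogram, i.e. that the gradients of $f$ and $g$ are linearly independent; this is the one place that needs care, and I expect it to be the main obstacle. I would handle it by contradiction: if all four level lines were parallel, then $g=\alpha f+\beta$ for scalars $\alpha,\beta$, and evaluating at $x_{00}$ (where $f=g=0$) and at $x_{10}$ (where $f=1$, $g=0$) would force $\beta=0$ and then $\alpha=0$, making $g$ constant and contradicting the nonemptiness of $G_1$. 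With independence established, $\ell_f^i\cap\ell_g^j$ is a single point for each $i,j\in\{0,1\}$, and these four points are exactly the vertices of the parallelogram $Q$.

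Finally I would close by a sandwich. Each $x_{ij}$ lies in the corresponding intersection of supporting lines, for instance $x_{00}\in F_0\cap G_0\subseteq \ell_f^0\cap\ell_g^0$, which is the single vertex of $Q$; hence $\{x_{00},x_{10},x_{01},x_{11}\}$ coincides with the vertex set of $Q$, giving $Q=\conv\{x_{00},x_{10},x_{01},x_{11}\}$. Since these four points belong to $S$ and $S$ is convex we get $Q\subseteq S$, while $S\subseteq Q$ by construction; therefore $S=Q$ is a parallelogram. Once the degenerate parallel-gradient case is excluded, this containment sandwich is routine.
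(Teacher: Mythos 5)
Your proof is correct and follows essentially the same route as the paper: both directions reduce to Proposition \ref{prop:maxInc-sufficient}, with $S$ trapped between the two pairs of parallel supporting lines $\{f=0\},\{f=1\}$ and $\{g=0\},\{g=1\}$ and the four corner points forcing $S$ to coincide with the resulting parallelogram. You are in fact somewhat more careful than the paper, which does not explicitly construct $f,g$ in the ``if'' direction nor verify the linear independence of the gradients in the ``only if'' direction.
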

\begin{proof}
Assume that maximally incompatible measurements exist on $S$, then it is clear that $S$ must contain 4 exposed faces, such that each of them has a nonempty intersection with other two and is disjoint from the third. It follows that the faces must be line segments and that $K$ is a polygon with 4 vertices corresponding to the intersections of the aforementioned exposed faces. The opposite edges of $S$ must be parallel as they are the intersections of $S$ with the hyperplanes $\{x \in \mathbb{R}^2: f(x) = 0 \}$ and $\{x \in \mathbb{R}^2: f(x) = 1 \}$ for some $f \in E(S)$.

Assume that $S$ is a parallelogram, then its vertices $z_1, z_2, z_3, z_4$ must satisfy
\begin{equation*}
\frac{1}{2} ( z_1 + z_2 ) = \frac{1}{2} ( z_3 + z_4 )
\end{equation*}
by definition. It follows that maximally incompatible measurements on $S$ exist by Prop. \ref{prop:maxInc-sufficient}.
\end{proof}

\begin{prop} \label{prop:maxInc-KcapVeqS}
Maximally incompatible two-outcome measurements on $K$ exists only if there is an affine subspace $V \subset \aff(K)$, $\dim(V) = 2$ such that $V \cap K = S$, where $S$ is a parallelogram.
\end{prop}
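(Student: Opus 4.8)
The plan is to feed Prop.~\ref{prop:maxInc-sufficient} with the pair $m_f,m_g$: since $\degcom(m_f,m_g)=\frac12$, it supplies four points $x_{00}\in F_0\cap G_0$, $x_{10}\in F_1\cap G_0$, $x_{01}\in F_0\cap G_1$, $x_{11}\in F_1\cap G_1$ satisfying $\frac12(x_{00}+x_{11})=\frac12(x_{10}+x_{01})$. I would then set $V=\aff(x_{00},x_{10},x_{01},x_{11})$ and $S=\conv(x_{00},x_{10},x_{01},x_{11})$. As all four points lie in $K$ and in $V$, and both sets are convex, we get $S\subseteq V\cap K$ for free, and $V\subseteq\aff(K)$ because the points lie in $K\subseteq\aff(K)$. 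The entire content of the statement is therefore the reverse inclusion $V\cap K\subseteq S$, together with the claim that $S$ is a genuine two-dimensional parallelogram.

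First I would fix an affine frame on $V$. Put $u=x_{10}-x_{00}$ and $v=x_{01}-x_{00}$; the midpoint relation gives $x_{11}=x_{00}+u+v$, so once $u,v$ are shown independent every point of $V$ is $x_{00}+su+tv$ for unique $s,t\in\mathbb{R}$, and $S$ is exactly the image of the unit square $\{0\le s,t\le1\}$. To pin down $\dim(V)=2$ I would read off the linear parts of the affine restrictions $f|_V,g|_V$: from $f(x_{00})=f(x_{01})=0$ and $f(x_{10})=1$ the linear part of $f$ sends $u\mapsto1$, $v\mapsto0$, while from $g(x_{00})=g(x_{10})=0$ and $g(x_{01})=1$ that of $g$ sends $u\mapsto0$, $v\mapsto1$. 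A dependence $v=cu$ would force $c=0$ from the $f$-computation, hence $x_{01}=x_{00}$, contradicting $g(x_{01})\neq g(x_{00})$; so $u,v$ are linearly independent and $\dim(V)=2$, making $S$ a true parallelogram.

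The key step is that on $V$ the two effects \emph{are} the coordinate functions. Writing $f|_V$ in the frame as $\alpha s+\beta t+\gamma$, the values at $x_{00},x_{10},x_{01}$ give $\gamma=0$, $\alpha=1$, $\beta=0$, i.e.\ $f(x_{00}+su+tv)=s$, and symmetrically $g(x_{00}+su+tv)=t$ (the values at $x_{11}$ checking for free). Now take any $p\in V\cap K$ and write $p=x_{00}+su+tv$. Since $p\in K$ and $f,g\in E(K)$, we have $0\le f(p)=s\le1$ and $0\le g(p)=t\le1$, so $p\in S$. This yields $V\cap K\subseteq S$, and combined with the easy inclusion we conclude $V\cap K=S$.

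I expect the only delicate point to be the degeneracy analysis guaranteeing $\dim(V)=2$, so that $S$ is honestly a parallelogram rather than a collapsed segment or point; everything else reduces to the observation that within $V$ the level sets $f=0,1$ and $g=0,1$ cut out precisely the four parallel edges of $S$, bounding $V\cap K$ on all sides. As a byproduct the argument identifies these edges with $F_0\cap V$, $F_1\cap V$, $G_0\cap V$, $G_1\cap V$, which makes transparent the ``two pairs of parallel exposed faces'' picture announced in the abstract.
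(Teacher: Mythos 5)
Your proof is correct and follows essentially the same route as the paper: both take the four points supplied by Prop.~\ref{prop:maxInc-sufficient}, set $V$ to be their affine hull, and use the fact that $f$ and $g$ restrict to affine coordinates on $V$ with values forced into $[0,1]$ on $K$ to conclude $V\cap K=\conv(x_{00},x_{10},x_{01},x_{11})$. The only (welcome) difference is that you explicitly verify the non-degeneracy $\dim(V)=2$, which the paper leaves implicit.
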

\begin{proof}
The principal idea is that $V = \aff(S)$ where $S$ is the parallelogram in question.

Lets assume that there exist maximally incompatible measurements $m_f$, $m_g$ on $K$ and let $x_{00}, x_{01}, x_{10}, x_{11}$ be the four points as in Proposition \ref{prop:maxInc-sufficient}. Let $V = \aff(x_{00}, x_{10}, x_{01})$ and let $S = V \cap K$, we will show that $S = \conv(x_{00}, x_{01}, x_{10}, x_{11})$. Let $y \in S$, then we must have
\begin{equation*}
y = \alpha_{10} x_{10} + \alpha_{01} x_{01} + (1 - \alpha_{10} - \alpha_{01}) x_{00}
\end{equation*}
for some $\alpha_{10}, \alpha_{01} \in \mathbb{R}$. Since $f, g \in E(S)$ we must have $\alpha_{10}, \alpha_{01} \in [0,1]$ which implies $(1 - \alpha_{10} - \alpha_{01}) \in [-1, 1]$. If $(1 - \alpha_{10} - \alpha_{01}) \in [0, 1]$ then $y \in \conv(x_{00}, x_{10}, x_{01})$. If $(1 - \alpha_{10} - \alpha_{01}) \in [-1, 0]$, then
\begin{equation*}
y = ( 1 - \alpha_{01}) x_{10} + (1 - \alpha_{10}) x_{01} - (1 - \alpha_{10} - \alpha_{01}) x_{11}
\end{equation*}
and $y \in \conv(x_{10}, x_{01}, x_{11})$. It follows that $S$ is a parallelogram by Corollary \ref{coro:maxInc-parallelogram}.
\end{proof}

We will present an example to show that the  condition in Prop. \ref{prop:maxInc-KcapVeqS} is not sufficient, even if the 
parallelogram $S$ is an exposed face of $K$.

\begin{exm} \label{ex:maxInc-cutOffPyramind}
Let $K \subset \mathbb{R}^3$ defined as
\begin{align*}
K = \conv ( \{ &(0, 0, 0), (2, 0, 0), (0, 2, 0), \\
&(2, 1, 0), (1, 2, 0), (1, 1, 1), \\
&(1, 0, 1), (0, 1, 1), (0, 0, 1) \} ),
\end{align*}
see Fig. \ref{fig:cutOffPyramid}. Let
\begin{equation*}
V = \left\lbrace (x_1, x_2, x_3) \in \mathbb{R}^3 : x_3 = 1 \right\rbrace
\end{equation*}
then $K \cap V = S$ where
\begin{equation*}
S = \conv ( \{  (1, 1, 1), (1, 0, 1), (0, 1, 1), (0, 0, 1) \} )
\end{equation*}
is an exposed face and a square.

To see that there is not a pair of maximally incompatible measurements $m_f$, $m_g$, corresponding to $S$, it is enough to realize that the effects $f$, $g$ would have to reach the values $0$ and $1$ on maximal faces that are not parallel, i.e. we would have to have $\aff(F_0) \cap \aff(F_1) \neq \emptyset$ and $\aff(G_0) \cap \aff(G_1) \neq \emptyset$ which is impossible.
\end{exm}

\begin{figure}
\includegraphics[scale=0.5]{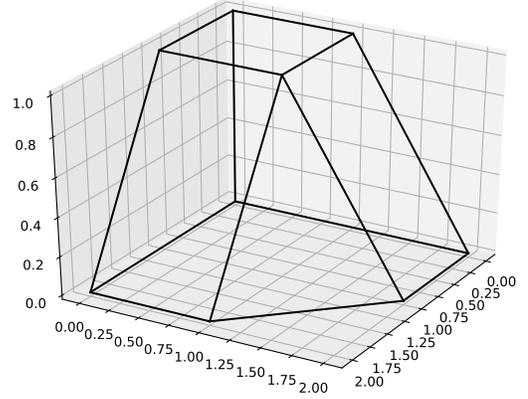}
\caption{The state space $K$ used in Example \ref{ex:maxInc-cutOffPyramind}.} 
\label{fig:cutOffPyramid}
\end{figure}

On the other hand, the examples of double pyramid or a cylinder show that maximally incompatible two-outcome measurements may exist on $K$ even if the parallelogram $S$ described in Prop. \ref{prop:maxInc-KcapVeqS} is not a face of $K$.

\section{Discrimination measurements} \label{sec:disc}
To avoid the problems presented in Example \ref{ex:maxInc-cutOffPyramind} we will introduce a new type of measurement that will allow us to formulate the conditions for existence  of maximally incompatible two-outcome measurements in a clearer way. This will also clarify why the measurements are maximally incompatible.

\begin{defin}
We say that a two-outcome measurement $m_f$ discriminates the sets $E_0, E_1 \subset K$, if it holds that
\begin{align*}
E_0 &\subset \{ x \in K: f(x) = 0 \}, \\
E_1 &\subset \{ x \in K: f(x) = 1 \}.
\end{align*}
We call such measurement a discrimination measurement.
\end{defin}
The idea of the definition is simple: assume that a system is in an unknown state, but we know that it either belongs to $E_0$ or $E_1$. By performing the discrimination measurement $m_f$ we can tell with $100\%$ accuracy whether the state of the system belongs to $E_0$ or $E_1$.

The definition can be generalized to general measurements that can discriminate more than two exposed faces. Most well-known discrimination measurements used in Quantum Mechanics are projective measurements consisting of rank-1 projections that discriminate the states corresponding to the projections.

We are ready to formulate the necessary and sufficient condition for existence of a pair of maximally incompatible two-outcome measurements.
\begin{prop} \label{prop:disc-discJointMeas}
The measurements  $m_f$ and $m_g$ are maximally incompatible  if and only if there is an affine subspace $V \subset \aff(K)$ such that $S=K \cap V$ is a parallelogram and $m_f$ and $m_g$ discriminate the opposite edges of $S$. 
\end{prop}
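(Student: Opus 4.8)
The statement is an equivalence, and the plan is to reduce both implications to Proposition \ref{prop:maxInc-sufficient}, which already characterizes maximal incompatibility through the existence of four points $x_{00},x_{10},x_{01},x_{11}$ with prescribed membership in the faces $F_i,G_j$ and satisfying the midpoint identity $\frac{1}{2}(x_{00}+x_{11})=\frac{1}{2}(x_{10}+x_{01})$. The only genuinely new content is a dictionary translating that algebraic condition into the geometric language of a parallelogram section whose opposite edges are discriminated; once this dictionary is in place, both directions follow quickly.

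For the forward implication I would begin with a maximally incompatible pair $m_f,m_g$, apply Proposition \ref{prop:maxInc-sufficient} to extract the four points, and then invoke Proposition \ref{prop:maxInc-KcapVeqS} (indeed its proof) to obtain the plane $V=\aff(x_{00},x_{10},x_{01})$ for which $S=K\cap V=\conv(x_{00},x_{01},x_{10},x_{11})$ is a parallelogram. The midpoint identity shows that $\{x_{00},x_{11}\}$ and $\{x_{10},x_{01}\}$ are its two diagonals, so its edges are $[x_{00},x_{01}]$, $[x_{01},x_{11}]$, $[x_{11},x_{10}]$, $[x_{10},x_{00}]$. Since $f$ is affine and vanishes at the endpoints $x_{00},x_{01}$, it vanishes on the whole edge $[x_{00},x_{01}]$; likewise $f\equiv 1$ on the opposite edge $[x_{11},x_{10}]$, so $m_f$ discriminates this pair. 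The same reasoning applied to $g$ at the endpoints of $[x_{10},x_{00}]$ and $[x_{01},x_{11}]$ shows $m_g$ discriminates the remaining pair of opposite edges.

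For the converse I would assume $S=K\cap V$ is a parallelogram with one pair of opposite edges discriminated by $m_f$ and the other by $m_g$. Each vertex lies on exactly one edge from each pair, so $f$ and $g$ each take a definite value in $\{0,1\}$ there; I would label the vertex where $(f,g)=(i,j)$ by $x_{ij}$. I then check that these four labels are distinct with the right memberships: the two endpoints of any $f$-edge share their $f$-value but, lying on the two disjoint (opposite) $g$-edges, carry different $g$-values, and symmetrically, so all four index pairs occur. Finally, any two adjacent vertices share an edge and hence agree in one coordinate, so the vertices $x_{00},x_{11}$ and $x_{10},x_{01}$, which differ in both coordinates, must be the two diagonals; the parallelogram's midpoint property then reads $\frac{1}{2}(x_{00}+x_{11})=\frac{1}{2}(x_{10}+x_{01})$, exactly the hypothesis of Proposition \ref{prop:maxInc-sufficient}, which yields $\degcom(m_f,m_g)=\frac{1}{2}$.

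I expect no computational difficulty; the whole weight of the argument sits in the combinatorial matching of the converse, namely verifying that discriminating opposite edges forces the four vertices to realize all four index pairs and that the complementary pairs occupy the diagonals, so that the defining relation of the parallelogram coincides with the condition of Proposition \ref{prop:maxInc-sufficient}. I would also take care that the converse never secretly uses $\dim V=2$: a parallelogram is automatically planar, so $S=K\cap V$ is a parallelogram regardless of the dimension of $V$, and every step goes through as stated.
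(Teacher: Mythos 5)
Your proposal is correct and follows essentially the same route as the paper: both directions are reduced to Proposition~\ref{prop:maxInc-sufficient} (together with Propositions~\ref{prop:maxInc-necessary} and~\ref{prop:maxInc-KcapVeqS}), with the four vertices of the parallelogram playing the role of $x_{00},x_{10},x_{01},x_{11}$. The only difference is that you spell out the combinatorial dictionary between the midpoint identity and the edge-discrimination picture, which the paper dispatches with ``it is clear'' and ``it can be seen.''
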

\begin{proof}
First assume that there is an affine subspace $V$ such that $K \cap V = S$ is a parallelogram whose opposite edges can be discriminated by measurements $m_f$, $m_g$. Denote the vertices of  $S$ as $x_{00}$, $x_{10}$, $x_{01}$, $x_{11}$, then it is clear that  the requirements of Prop. \ref{prop:maxInc-necessary} are satisfied and thus we must have $\degcom(m_f, m_g) = \frac{1}{2}$.

Assume that for some two-outcome measurements $m_f$, $m_g$ we have $\degcom(m_f, m_g) = \frac{1}{2}$. As a result of Prop. \ref{prop:maxInc-KcapVeqS} there must exist an affine subspace $V$ and a parallelogram $S$ such that $K \cap V = S$, moreover it can be seen that we must have
\begin{align*}
E_1 &\subset F_0 \\
E_3 &\subset F_1 \\
E_2 &\subset G_0 \\
E_4 &\subset G_1
\end{align*}
for the edges $E_1$, $E_2$, $E_3$, $E_4$ of $S$. This implies that $m_f$ discriminates $E_1$ and $E_3$, $m_g$ discriminates $E_2$ and $E_4$.
\end{proof}

To give some insight into maximal incompatibility of two-outcome measurements, consider that in general the joint measurement $m$ of $m_f$ and $m_g$ is of the form
\begin{equation*}
m = p \delta_{(1,1)} + (f-p) \delta_{(1,2)} + (g-p) \delta_{(2,1)} + (1+p-f-g) \delta_{(2,2)},
\end{equation*}
for some $p \in E(K)$ as in Eq. \eqref{eq:meas-cond-1} - \eqref{eq:meas-cond-3}, see \cite{Plavala-simplex} for a more detailed calculation. Assume that $m_f$, $m_g$ satisfy the requirements of Prop. \ref{prop:maxInc-necessary}. Inserting  $x_{11} \in F_1 \cap G_1$ into Eq. \eqref{eq:meas-cond-3}, we get
\begin{equation*}
p(x_{11}) \geq 1
\end{equation*}
that together with $p \in E(K)$ implies $p(x_{11}) = 1$. Eq. \eqref{eq:meas-cond-1}, \eqref{eq:meas-cond-2} and the positivity of $p$ imply
\begin{equation*}
p(x_{00}) = p(x_{10}) = p(x_{01}) = 0.
\end{equation*}
Expressing also the functions $(f-p)$, $(g-p)$ and $(1+p-f-g)$ on the points $x_{00}$, $x_{10}$, $x_{01}$, $x_{11}$ we get
\begin{align*}
(f-p)(x_{00}) = (f-p)(x_{01}) = (f-p)(x_{11}) &= 0, \\
(f-p)(x_{10}) &= 1, \\
(g-p)(x_{00}) = (g-p)(x_{10}) = (g-p)(x_{11}) &= 0, \\
(g-p)(x_{01}) &= 1, \\
(1+p-f-g)(x_{00}) = (1+p-f-g)(x_{10}) &= 0, \\
(1+p-f-g)(x_{01}) &= 0, \\
(1+p-f-g)(x_{11}) &= 1.
\end{align*}
This shows that the joint measurement $m$ would have to discriminate the points $x_{00}$, $x_{10}$, $x_{01}$, $x_{11}$, which is generally impossible as they are required to be affinely dependent. We have proved the following:
\begin{coro}
Two-outcome measurements $m_f$, $m_g$ are maximally incompatible if and only if the Eq. \eqref{eq:meas-cond-1} - \eqref{eq:meas-cond-3} imply that the joint measurement would have to discriminate points $x_{00}$, $x_{10}$, $x_{01}$, $x_{11}$ such that
\begin{equation*}
\dfrac{1}{2} (x_{00} + x_{11}) = \dfrac{1}{2} (x_{10} + x_{01}).
\end{equation*}
\end{coro}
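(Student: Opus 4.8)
The plan is to prove the two implications separately, using Prop.~\ref{prop:maxInc-sufficient} to trade maximal incompatibility for the four distinguished points and a direct evaluation of the candidate joint POVM to exhibit the discrimination. For the forward implication I would assume $\degcom(m_f, m_g) = \frac{1}{2}$ and invoke Prop.~\ref{prop:maxInc-sufficient} to obtain $x_{00} \in F_0 \cap G_0$, $x_{10} \in F_1 \cap G_0$, $x_{01} \in F_0 \cap G_1$, $x_{11} \in F_1 \cap G_1$ with $\frac{1}{2}(x_{00} + x_{11}) = \frac{1}{2}(x_{10} + x_{01})$. Reasoning conditionally about an arbitrary $p \in E(K)$ satisfying \eqref{eq:meas-cond-1}--\eqref{eq:meas-cond-3}, I would evaluate the four effects $p$, $f-p$, $g-p$, $1+p-f-g$ at these points: substituting $x_{11}$ into \eqref{eq:meas-cond-3} together with $p \le 1$ forces $p(x_{11}) = 1$, while \eqref{eq:meas-cond-1}, \eqref{eq:meas-cond-2} and $p \ge 0$ force $p$ to vanish at the other three. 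Propagating these values shows that each of the four effects takes value $1$ at exactly one point and $0$ at the remaining three, so the joint measurement would discriminate $x_{00}, x_{10}, x_{01}, x_{11}$.

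For the converse I would assume that \eqref{eq:meas-cond-1}--\eqref{eq:meas-cond-3} force the joint measurement to discriminate four points $x_{00}, x_{10}, x_{01}, x_{11}$ with $\frac{1}{2}(x_{00} + x_{11}) = \frac{1}{2}(x_{10} + x_{01})$, and read the face memberships back off the discrimination pattern. The point carrying the $(1,1)$ outcome satisfies $p = 1$ there, hence $f = g = 1$ by $p \le f$ and $p \le g$, so it lies in $F_1 \cap G_1$; analogously the remaining three points land in $F_1 \cap G_0$, $F_0 \cap G_1$ and $F_0 \cap G_0$. Together with the midpoint relation these are exactly the hypotheses of Prop.~\ref{prop:maxInc-necessary}, which yields $\degcom(m_f, m_g) = \frac{1}{2}$.

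I expect the main obstacle to be conceptual rather than computational: the statement reasons about a joint measurement that does not exist, so care is needed to phrase "the conditions imply the joint measurement would discriminate" as a genuine consequence of the constraint inequalities rather than a vacuous implication. The device that resolves this, and that I would make explicit, is that affineness of every $p \in A(K)$ turns the midpoint relation into $p(x_{00}) + p(x_{11}) = p(x_{10}) + p(x_{01})$; an effect isolating $x_{11}$ would make the left-hand side equal to $1$ and the right-hand side equal to $0$, and this single contradiction both certifies that no admissible $p$ can exist and explains operationally why $m_f$, $m_g$ are maximally incompatible.
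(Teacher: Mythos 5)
Your proposal is correct and follows essentially the same route as the paper: obtain the four points with the midpoint relation from Prop.~\ref{prop:maxInc-sufficient}, evaluate the constraints \eqref{eq:meas-cond-1}--\eqref{eq:meas-cond-3} at those points to force $p(x_{11})=1$ and $p(x_{00})=p(x_{10})=p(x_{01})=0$, propagate to the four effects of the would-be joint measurement, and for the converse read the face memberships off the discrimination pattern and apply Prop.~\ref{prop:maxInc-necessary}. Your closing remark about affineness turning the midpoint relation into the contradiction $1=0$ is exactly the paper's point about the impossibility of discriminating affinely dependent points.
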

This result may also be tied to the way how the joint measurement can be constructed as described also in \cite{HeinosaariMiyaderaZiman-compatibility}; we can toss a fair coin and implement one of the measurements based on the result while ignoring the other. This just roughly shows that maximally incompatible measurements are so incompatible, that the only way how to perform them both requires as much noise as randomly ignoring one of the measurements.

\section{Conclusions}
We have shown necessary and sufficient condition for existence of maximally incompatible two outcome measurements. It turned out that the example of square state space in \cite{BuschHeinosaariSchultzStevens-compatibility} did cover the essence of why and how maximally incompatible measurements come to be in General Probabilistic Theory, but it was also demonstrated by Example \ref{ex:maxInc-cutOffPyramind} that in general more conditions have to be required for more than two dimensional state spaces. 

The geometric interpretation of the minimal degree of compatibility that can be attained on a state space $K$ is an interesting question for future research. It would be also of interest whether the connection between discriminating certain sets and compatibility of measurements could be fruitful from information-theoretic viewpoint. This area of research might also yield some  insight into why there exist maximally incompatible measurements on quantum channels as demonstrated by Example \ref{ex:maxInc-channels}, when they do not exist on quantum states.

\section*{Acknowledgments}
The authors are thankful Teiko Heinosaari as the idea for this calculation emerged during our conversation.

This research was supported by grant VEGA 2/0069/16. MP acknowledges that this research was done during a PhD study at Faculty of Mathematics, Physics and Informatics of the Comenius University in Bratislava.

\bibliography{citations}
\end{document}